\documentclass[journal]{IEEEtran}
\usepackage{amsmath}
\usepackage{amsmath}
\usepackage{amssymb}
\usepackage{amsthm}
\usepackage{mathrsfs}
\usepackage{subfigure}
\usepackage{setspace}
\usepackage{cite}
\usepackage{epsfig}
\usepackage{epsf}
\usepackage{graphics,color}
\usepackage[active]{srcltx}
\usepackage{algpseudocode}
\usepackage{multirow}
\usepackage{array}
\usepackage{graphicx}
\usepackage{blindtext}
\usepackage{bm}
\usepackage{multirow}
\usepackage[flushleft]{threeparttable}
\usepackage{algorithm2e}
\usepackage{algpseudocode}
\usepackage{multirow}
\usepackage{booktabs}
\usepackage[table]{xcolor}
\usepackage{url}

\newtheorem*{proposition}{Proposition}

\begin{document}
	
	\title{A Low-Complexity PFA-Based Autofocus Algorithm for Automotive SAR} 
	
	\author{S.~Hamed~Javadi, André~Bourdoux,~\IEEEmembership{senior member,~IEEE}, Adnan~Albaba, and~Hichem~Sahli
		\thanks{Manuscript received XYZ.}
		\thanks{The authors are with Interuniversity Micro-Electronics Center (IMEC), Kapeldreef 75, B-3001 Leuven, Belgium. H. Sahli is also with Informatics Dept., Vrije Universiteit Brussel (VUB), Pleinlaan 2, 1050 Brussels, Belgium. (email: hamed.javadi@imec.be; andre.bourdoux@imec.be; adnan.albaba@imec.be; hichem.sahli@imec.be).}
	}
	\markboth{IEEE Transactions on Radar Systems,~Vol.~*, No.~*, Month~yyyy}%
	{Javadi, Bordoux, and Sahli:ISAR imaging with MIMO FMCW radars}
	
	\maketitle
	
	\begin{center}
		Published in IEEE Transactions on Radar Systems.\\
		DOI: \url{https://doi.org/10.1109/TRS.2025.3574010}
	\end{center}
	
	\begin{abstract}
	Radars provide robust perception of vehicle surroundings by effectively functioning in poor light and adverse weather conditions. Synthetic aperture radar (SAR) algorithms are employed to address the limited angular resolution of radars by enlarging antenna aperture size synthetically as the radar moves. An autofocus algorithm is essential to improve the SAR image quality by compensating for errors mainly caused by inaccurate radar localization. Existing autofocus algorithms are mostly tailored for the frequency domain SAR techniques which are prevalent in aviation and spaceborne applications thanks to their lower complexity in large data processing. However, in the automotive context, the backprojection algorithm (BPA) is often preferred since it provides less distorted images at the cost of more complexity. Addressing the gap in efficient autofocus solutions for time-domain algorithms, this paper introduces a dual-layered autofocus strategy that integrates the Polar Format Algorithm (PFA) with BPA. The first layer employs a novel Localization Error Compensation Autofocus (LECA) processing pipeline to estimate and correct the localization errors within the PFA domain, leveraging its computational efficiency. The second layer seamlessly transfers these corrections to BPA, enabling high-quality SAR imaging while maintaining low complexity. Additionally, the strategy extends Phase Gradient Autofocus (PGA) techniques to enhance the efficiency of localization error compensation for BPA. Validated through real-world automotive experiments, the proposed pipeline delivers state-of-the-art image focus and resolution, setting a new benchmark for computationally efficient SAR imaging.
	\end{abstract}
	
	\begin{IEEEkeywords}
		Autofocus, backprojection algorithm (BPA), frequency modulated continuous wave (FMCW), multiple-input-multiple-output (MIMO), phase gradient autofocus (PGA), polar format algorithm (PFA), radar imaging, synthetic aperture radar (SAR).
	\end{IEEEkeywords}
	
	\section{Introduction}
	\subsection{Motivation}
	\IEEEPARstart{R}{adars} operate independently of lighting conditions and can penetrate fog. They achieve cm-level range resolution by utilizing the wide bandwidth available at mm-wave frequencies. These benefits come at a low cost, making radar an essential modality in complex multi-sensor systems, such as in automotive applications \cite{multisensor_automotive_TIV} where robust and reliable environmental perception is critical. Specifically, the advent of advanced driver assistance systems (ADAS) in autonomous driving has significantly increase interest in frequency-modulated continuous wave (FMCW) radars due to their simple receiver design and the reduced data rate requirements for analog-to-digital converters (ADCs) \cite{automotive_radar_survay_TIV}.
	
	While range resolution depends solely on bandwidth, achieving fine angular resolution requires a large radar aperture. This large aperture is impractical at low frequencies and expensive at mm-wave frequencies due to front-end complexities. An alternative solution is a synthetic aperture radar (SAR) algorithm, which effectively enlarges the aperture by utilizing the radar motion \cite{fs_sar_2019}. In SAR, radar data is collected over time from different locations as the radar moves, synthesizing a larger aperture size.
	
	The first practical SAR implementation probably dates back to 1978 when it was used in spaceborne applications for oceanographic purposes \cite{Franceschetti1999}. Since then, different SAR algorithms, ranging from Doppler beam sharpening (DBS) to matched filter (MF) \cite{Richards2014fundamentals} have emerged each offering different compromises between image quality and complexity. Nevertheless, they are all plagued by phase errors mainly\footnote{In the mm-wave frequencies} due to positioning errors, leading to defocused images.
	
	While MF requires a four-fold integration over an area of interest (AoI), the backprojection algorithm (BPA) \cite{Doerry_BPA} provides a less-complex alternative by back-projecting the range-compressed data into the image pixels and accumulating them after Doppler compensation. Despite its relatively high complexity, BPA is widely used thanks to its ability to produce high-resolution images. At the lower end of the complexity spectrum is DBS, which simply applies a 2D Fourier transform to the radar data, though it is limited to small angular coverage. In contrast, the polar format algorithm (PFA) \cite{Doerry_PFA} enhances the SAR image quality by correcting the distortions in DBS caused by wavenumber dependencies.
	
	In this paper, we employ the Polar Format Algorithm (PFA) as an efficient Synthetic Aperture Radar (SAR) solution for automotive applications. We address autofocus challenges by proposing two optimized techniques to enhance PFA autofocus performance. Finally, we integrate the refined PFA autofocus results into the Backprojection Algorithm (BPA) to generate high-resolution images.	
	
	\subsection{Related works}
	Recent advancements in automotive SAR systems have focused on improving resolution, reducing hardware size, and integrating SAR with other sensor technologies such as LiDAR and optical cameras \cite{Heidbrink2024}. Additionally, SAR can operate under conditions where optical and LiDAR systems may struggle, such as in low visibility or adverse weather conditions. This makes SAR a critical component in the sensor fusion strategies used in modern autonomous driving systems \cite{sar_review_rs_2022}.
	
	An Automotive SAR has been used in \cite{Iqbal_Löffler_Mejdoub_Zimmermann_Gruson_2021} for parking application where the vehicle trajectory is estimated by a secondary radar in the front bumper of the vehicle while the side radar runs BPA. Using a multiple-input-multiple-output (MIMO) FMCW radar, it is proposed by \cite{mimo_sar_2021} to detect AoI first by conventional MIMO processing and then utilize BPA for automotive SAR. Manzoni et al. in \cite{sar_review_rs_2022} compare three algorithms for combining low-resolution images from a MIMO radar to form a single SAR image. These algorithms include the fast factorized backprojection (FFBP), the 3D2D scheme, and the Quick\&Dirty (Q\&D) method. The study provides a detailed mathematical description of each algorithm along with their evaluations based on open road campaign data. However, a more comprehensive automotive MIMO-SAR has been outlined by Zhang et. al. in \cite{mimo_sar_2023} where the pixel-wise phase difference among different channels is compensated to increase the signal-to-noise ratio (SNR) by coherent processing before backprojection.
	
	Polisano et al. in \cite{autofocus_cs_2023,autofocus_cs_2023_conf} propose to achieve an extremely fine resolution by mitigating grating lobes due to radar silence in a large SAR coherent processing interval (CPI) by spectrum approximation using compressive sensing. Tagliaferri et al. in \cite{navigation_sar_2021} present a multi-sensor fusion technique that combines data from Global Navigation Satellite System (GNSS), Inertial Measurement Units (IMUs), odometers, and steering angle sensors to improve positioning accuracy. The experimental results demonstrate that this approach achieves cm-level accuracy in urban environments.
	
	The methods discussed rely on a highly precise navigation system for SAR. While current automotive technology provides centimeter-level accuracy, which may be sufficient for short-range SAR imaging, the navigation rate is significantly lower than the radar pulse repetition frequency (PRF). Additionally, navigation signals may be unavailable in certain areas, and this level of accuracy is insufficient for long-range SAR imaging \cite{sar_conf_2015}. Therefore, the employment of an autofocus algorithm is essential to improve the SAR image quality \cite{Ciaramitaro2023}. 
	
	On the other hand, SAR algorithms have evolved thanks to their diverse aviation and space applications where the frequency domain techniques have been prevalent due to their lower complexity in handling large data \cite{csa1994}. Consequently, autofocus methods have mostly been developed for frequency domain techniques \cite{Gupta1981,Wahl1994,Jakowatz1996,mea2014,sharpness2007,autofocus_cs_2023}, while fewer exist for time domain techniques. Notably, time-domain BPA is often preferred for automotive applications, with specific examples provided later to illustrate the reasons for this preference. A potential solution for BPA autofocus could involve using matched-filtering across the radar locations within the CPI. However, the complexity of this method might be greater than that of BPA, rendering it impractical. In this paper, we propose using the phase gradient autofocus (PGA) \cite{Wahl1994,Gupta1981} on top of PFA and using the phase error correction for BPA. This approach offers a computationally efficient autofocus algorithm suitable for BPA.
	
	PGA \cite{Wahl1994,Gupta1981} is widely used because it is fast. It assumes that each range bin is dominated by at most one major scatterer and proves that the phase error spectrum is modulated by that major scatterer. Then, it uses the redundancy across the image major scatterers to estimate the phase error gradient based on which the phase error is estimated. 
    
    While computationally efficient, PGA may fail in cases with low SNRs. In these cases, parametric solutions may perform better at a cost of more complexity. The minimum entropy autofocus (MEA) \cite{mea2014} assumes that the phase error is quadratic and estimates the quadratic phase error by minimizing the image entropy (IE). Instead of the IE, other image sharpness metrics can be used for the image quality enhancement \cite{sharpness2007}.	

    \subsection{Summary of contributions and manuscript organization}	
	The contributions of this paper are summarized as follows:
	\begin{itemize}
		\item We introduce a signal model for localization errors in PFA, demonstrating that these errors manifest in the cross-range dimension.
		\item Building on this model, we propose the Localization Error Compensation Autofocus (LECA) algorithm for PFA, which optimizes image contrast (IC) --- as image quality metric --- to estimate localization errors. We show that compensating for velocity-related errors is particularly crucial.
		\item We propose utilizing the localization errors calculated by LECA in BPA. Since PFA is less complex than BPA,  the use of PFA significantly reduces the complexity of BPA autofocus.
		\item Given the efficiency of PGA, we propose to estimate the phase error by PGA and use it for the compensation of the localization errors of BPA. This makes the BPA autofocus further efficient.
		\item The effectiveness of the proposed autofocus algorithms is validated through evaluations in an automotive scenario in various SAR processing pipelines.
	\end{itemize}

    While the individual components of the proposed method --- PFA, BPA, and PGA --- are widely recognized and mature techniques in SAR imaging, their strategic integration within a dual-layered autofocus pipeline presents a novel approach to optimizing automotive SAR imaging. This integration allows for an efficient correction of localization errors, ensuring improved resolution without incurring excessive computational costs.
    
	The remainder of this manuscript is organized as follows. Sec. \ref{sec:background} presents the fundamentals of FMCW radars and radar imaging with them. Our proposed autofocus algorithms are elaborated in Sec. \ref{sec:leca} with their evaluation results presented in Sec. \ref{sec:results}. Finally, the paper is concluded in Sec. \ref{sec:Conclusion} with future directions toward further improvements.	
	
	\section{Background}\label{sec:background}	
	\subsection{FMCW radars}\label{sec:fmcw}
	In FMCW radars, $N_c$ chirps are transmitted during each CPI. Each chirp is a continuous wave with frequency starting from $f_c$ and linearly increasing to $f_c+B$ with $B$ being the radar bandwidth. Accordingly, the transmitted chirp is modeled by:
	\begin{equation}
		s_T(t) = a_c \exp\left[j\left(\omega_c+\frac{\gamma}{2}t\right)t\right],
	\end{equation}
	where $a_c$ denote the chirp amplitude, $\omega_c\triangleq 2\pi f_c$ is its starting angular frequency, and $\gamma$ is its slope.
	
	The received echo signal from a specific scatterer $\bm{s}$ with round-trip time $\tau_{\bm{s}}$ is given by:
	\begin{equation}
		s_R(t)=\sigma(\bm{s}) \exp\left[j\left(\omega_c+\frac{\gamma}{2}\left(t-\tau_{\bm{s}}\right)\right)\left(t-\tau_{\bm{s}}\right)\right],
	\end{equation}
	wherein the scatterer's reflectivity, the system gains, and the propagation effects are included in $\sigma(\bm{s})$ for convenience. The received signal is demodulated to give the beat signal as follows:
	\begin{equation}
		\begin{aligned}
			s_B(t)&=s_T^*(t)s_R(t)\\
			&=\sigma(\bm{s})\exp\left[-j\tau_{\bm{s}}\left(\omega_c+\gamma t-\frac{\gamma}{2}\tau_{\bm{s}}\right)\right],
		\end{aligned}		
	\end{equation}
%
%
	where $s_T^*(t)$ is the complex conjugate of $s_T(t)$.
	The maximum unambiguous range of an FMCW radar is given by $r_{max}=\frac{\pi cF_s}{\gamma}$ with $F_s$ and $c$ being the sampling rate and the light speed, respectively. This gives $\frac{\gamma}{2}\tau_{\bm{s}}=\frac{\pi c F_s}{r_{max}}\times\frac{2r_s (t)}{c}=\frac{ r_s (t)}{r_{max}} 2\pi F_s$ where $r_s (t)$ is the slant range to the scatterer $\bm{s}$. This implies that the term $\gamma\tau_{\bm{s}}/2$ is negligible compared to $\omega_c$, especially since $F_s$ is in the range of at most several MHz\footnote{This term results in the residual video phase error (RVPE). Here, we reason that this error is negligible in the current technology of mm-wave FMCW radars.}.
	
	Therefore, the beat signal can be approximated by:
	\begin{equation}\label{eq:beat.signal}
		s_B(i,n)\approx\sigma(\bm{s})\exp\left[-j\frac{2}{c}\left(\omega_c+\gamma T_s i\right) r_s(n)\right],
	\end{equation}
	where $T_s$ denotes the receiver sampling period and the model has been discretized in fast time index $i$ and slow time index (viz. chirp number) $n$.
	
	\subsection{SAR imaging with FMCW radar}\label{sec:sar}
	As mentioned above, diverse SAR algorithms, ranging from DBS to MF, offer various trade-offs between image quality and computational complexity. While DBS is too limiting regarding angular coverage and MF is computationally intensive, the Polar Format Algorithm (PFA) strikes a balance by enhancing image quality through wave-number distortion corrections while maintaining manageable complexity.
	\subsubsection{PFA}
	\begin{figure}[ht]
		\begin{centering}
			\includegraphics[width=4cm]{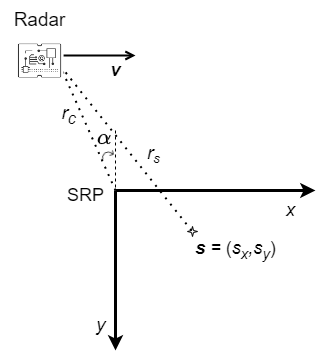}
			\par\end{centering}
		\caption{The SAR 2D geometry. SRP indicates the SAR origin and stands for the SAR reference point.\label{fig:geo}}
	\end{figure}
	
	The Polar Format Algorithm (PFA) leverages the geometric properties of SAR imaging to correct wave-number distortions and improve image quality. As illustrated in the 2D geometry in Fig. \ref{fig:geo}, the slant range to a scatterer, $r_s(n)$ in \eqref{eq:beat.signal},  can be approximated by projecting the scatterer’s position onto the radar line of sight (RLOS). This projection is expressed as $r_s(n) = r_c(n)+y_s\cos\left(\alpha\right)+x_s\sin\left(\alpha\right)$ with $r_c(n)$ and $\alpha$ being the radar range to the SAR reference point (SRP) and its squint angle, respectively. Then, compensating the beat signal w.r.t. $r_c(n)$ gives:
	
	\begin{equation}\label{eq:compensated.signal}
		\begin{aligned}
			&s_C(i,n) \triangleq s_B(i,n)\exp\left[jk(i) r_c(n)\right]=\\
			&\int_{x_s} \int_{y_s} \sigma(x_s,y_s)\exp\left[-j\left(k_x(i,n)x_s+k_y(i,n)y_s\right)\right] dx_s dy_s,
		\end{aligned}
	\end{equation}
	wherein the integration is taken over all scatterers in the radar's FoV. In \eqref{eq:compensated.signal}, $k(i) \triangleq \frac{2}{c}\left(\omega_c+\gamma T_s i\right)$ is the wavenumber (i.e., the spatial phase rate \cite{Doerry_PFA}) with its components $k_x(i,n) \triangleq k(i) \sin\left(\alpha_n\right)$ and $k_y(i,n) \triangleq k(i) \cos\left(\alpha_n\right)$, respectively, and  $n$ represents the chirp index, and the dependency of the squint angle $\alpha$ on the chirp number $n$ has been emphasized.
	
	The signal model \eqref{eq:compensated.signal} has the form of a 2D Fourier transform (FT) but with co-dependent spatial frequencies (i.e., the wavenumbers). Specifically, $k_x(i,n)$ and $k_y(i,n)$ form a polar format, where the amplitude is given by $k(i)$, and the phase varies with $n$. To enable fast Fourier transform (FFT), the indices $i$ and $n$ are redefined into $i^\prime$ and $n^\prime$, respectively. This allows the range-compensated signal, $s_C(i,n)$, to be interpolated onto a grid of wavenumbers, $k_x[n^\prime]$ and $k_y[i^\prime]$\footnote{Further details can be found in \cite{Doerry_PFA}.}. Consequently, an image of the target scene is reconstructed by performing a 2D inverse FFT (IFT) from the interpolated signal $s_C[i^\prime,n^\prime]$.
	\begin{equation}
		\sigma(x_s,y_s)=2\text{D-IFT}\left\{s_C\left(i^\prime,n^\prime\right)\right\}.
	\end{equation}
	
	The extension of the 2D configuration shown in Fig. \ref{fig:geo} to 3D can be achieved by scaling the wavenumbers with $\cos(\psi_n)$, where $\psi_n$ represents the radar grazing angle for each chirp.
	
	\subsubsection{BPA}

    The Backprojection Algorithm (BPA) is a time-domain SAR imaging method known for its ability to produce high-resolution images. BPA begins by defining an image grid representing the target scene. For each chirp and image pixel, the range profile is interpolated based on the radar's data. The image pixel is then reconstructed by summing the interpolated range profiles across all chirps, with appropriate compensation for the phase term. Mathematically, the range profile of the beat signal \eqref{eq:beat.signal} is given by its FT along the fast time index:
	\begin{equation}
		G(v,n) = FT\{s_B(i,n)\},
	\end{equation}
	with $v$ denoting the range index. Then, the image pixel in the grid location $\left(x_p,y_p\right)$ is reconstructed by:
	\begin{equation}\label{eq:bpa.algorithm}
		\sigma\left(x_p,y_p\right) = \sum_{n}G(v^\prime,n)\exp(j\frac{2 \omega_c r_{p,n}}{c}),
	\end{equation}
	wherein $v^\prime$ indicates a fractional index used for interpolating the range profile $G(v,n)$ at the pixel location and $r_{p,n}$ is the range to the pixel at the instance of the $n$'th chirp. Eq. \eqref{eq:bpa.algorithm} must be applied for all pixels of the image grid.
    
	\section{localization error compensation autofocus (LECA)}\label{sec:leca}
	In this section, we first discuss the problem statement, then, propose the localization error compensation autofocus (LECA) for PFA. Finally, we extend the idea to BPA.
    \subsection{Problem statement}\label{sub.problem.statement}
    We highlight that positioning errors in automotive SAR systems are inevitable 
    due to insufficient accuracy and data rate of the automotive navigation systems. This necessitates the application of an appropriate autofocus algorithm to compensate for those errors. However, the state-of-the-art efficient autofocus algorithms are tailored for the frequency-domain SAR techniques.
    
    Accordingly, the problem considered here is (a) the derivation of a model explicitly showing the impact of the localization error on the SAR imaging quality; and (b) devising an effective and meanwhile computationally simple autofocus framework for the automotive SAR. We remark that the maturity and efficiency of the Fourier-domain techniques are leveraged in our problem.

    Finally, the performance of the proposed framework will be evaluated in real-life automotive scenarios in terms of resolution, image contrast (IC), image entropy (IE), and computational complexity (CC).
	\subsection{PFA autofocus}\label{sec:pfa.autofocus}
	\begin{proposition}
		After the beat signal is compensated w.r.t. the SRP in \eqref{eq:compensated.signal} using measured radar locations, its phase error due to the localization error is approximated by:
		\begin{equation}\label{eq:phase.error}
			\epsilon_{\phi}(n) = k_x(i,n) \hat{\beta} n.
		\end{equation}
		Hence, the phase error is compensated by:
		\begin{equation}\label{eq:leca}
			s_C(i,n) \leftarrow \exp\left[-jk_x(i,n) \hat{\beta} n\right]s_C(i,n),
		\end{equation}
		where $\hat{\beta}$ is given by maximizing the image contrast (IC) of the PFA image:
		\begin{equation}\label{eq:maxic}
			\hat{\beta} = \arg \max IC\left(\sigma(x,y)\right).
		\end{equation}
		IC is defined as the normalized image variance \cite{Martorella2005}:
		\begin{equation}\label{eq:ic}
			IC\left(\sigma(x,y)\right)\triangleq\frac{\sqrt{\text{mean}\left\{\left(I-\text{mean}(I)\right)^2\right\}}}{\text{mean}(I)},
		\end{equation}
		with $I\triangleq\left|\sigma(x,y)\right|^2$ being the image intensity.
	\end{proposition}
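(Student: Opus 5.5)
The plan is to model the localization error explicitly and push it through the SRP compensation in \eqref{eq:compensated.signal}. Write the true radar position at chirp $n$ as the measured position plus an error vector $\bm{e}_n$. For an automotive platform over one CPI, the dominant error is a velocity bias, so I would posit $\bm{e}_n \approx \bm{\beta} n$. Here $\bm{\beta}$ is the error in the per-chirp displacement, i.e. the velocity error times the chirp period. A constant offset would only shift the image and is not a focusing issue, so I would absorb it or neglect it. Compensation in \eqref{eq:compensated.signal} uses the measured range $\hat r_c(n)$ instead of the true $r_c(n)$. The residual phase is therefore $k(i)\bigl(r_c(n)-\hat r_c(n)\bigr)$, and it multiplies the ideal Fourier kernel.

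The next step is to linearize. By the same far-field projection used to derive $r_s(n)=r_c(n)+y_s\cos\alpha+x_s\sin\alpha$, the range change caused by a small radar displacement $\bm{e}_n$ is its projection onto the RLOS. This gives $r_c(n)-\hat r_c(n)\approx e_{x,n}\sin\alpha_n + e_{y,n}\cos\alpha_n$, with a sign set by the direction convention in Fig.~\ref{fig:geo}. Multiplying by $k(i)$ yields $k_x(i,n)e_{x,n}+k_y(i,n)e_{y,n}$.

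I would then argue that the $k_y$ term is negligible. The geometry is side-looking with the vehicle moving along $x$, so the velocity error is essentially along the track, $e_{x,n}=\hat\beta n$. Any cross-track component of $\bm{e}_n$ that is linear in $n$ with $\cos\alpha_n$ nearly constant acts like a range shift, and a range shift does not defocus the image. This leaves $\epsilon_\phi(n)=k_x(i,n)\hat\beta n$, which is \eqref{eq:phase.error}. It depends on $n$ through $k_x$, so after polar-to-rectangular interpolation it appears along the cross-range dimension. Multiplying $s_C$ by the conjugate phase then gives \eqref{eq:leca}.

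The parameter $\hat\beta$ is unknown and only one-dimensional. I would therefore estimate it by a scalar search that maximizes a focus metric. For each candidate $\beta$, apply \eqref{eq:leca}, form the PFA image by 2D-IFT, and evaluate \eqref{eq:ic}. The justification is the standard one from \cite{Martorella2005}: residual phase error spreads the energy of point scatterers and lowers the normalized variance of the intensity, so the contrast peaks at the correct compensation. This gives \eqref{eq:maxic}.

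I expect the main obstacle to be justifying the reduction of the full 3D error to the single along-track linear term. The difficult points are bounding the neglected $k_y$ and quadratic contributions, and showing that the sign and scaling, including the grazing-angle factor $\cos\psi_n$, can be absorbed into $\hat\beta$. I would handle this through an order-of-magnitude argument on the small aperture angle rather than an exact bound.
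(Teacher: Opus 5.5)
Your proposal is correct and is essentially the paper's argument. A first-order expansion turns the along-track error into the line-of-sight range error $\sin(\alpha_n)e_x(n)$, which factors out of the integral as $\exp[jk_x(i,n)e_x(n)]$ multiplying $s_C$. The linear model $e_x(n)=\beta n$, conjugate compensation, and IC maximization then give the statement; the paper simply posits IC maximization rather than deriving it. The one structural difference is where the far-field approximation enters. The paper expands each scatterer's range $r_s^\star$ and then approximates $(x_r-x_s)/r_s\approx\sin\alpha_n$, which makes the scatterer-independence assumption explicit. You linearize the SRP range directly and absorb that same approximation into the kernel. One small correction concerns the cross-track term. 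With $n\approx\alpha_n/\alpha^{\prime}$, a linear error $e_y=\beta_y n$ gives $k_y(i,n)\beta_y n\approx\beta_y k_x(i,n)/\alpha^{\prime}$. That is a cross-range displacement rather than a range shift, so it is still harmless. By contrast, $k_x(i,n)\hat\beta n\approx k(i)\alpha^{\prime}\hat\beta n^2$ is quadratic, which is why the along-track term is the one that defocuses. This is a sharper justification than the paper's remark that $e_y$ and $e_z$ are simply small.
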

	
	\begin{IEEEproof}
		We denote the measured radar location by $[x_r, y_r, z_r]^T$, incorporating a localization error $e_x$ along the track dimension ($x$). For simplicity, the index $n$ is omitted, though both the location and its error inherently depend on it. The true slant range to a scatterer located at $[x_s, y_s, 0]^T$ is then expressed as:
		
		\begin{equation}
			r_s^\star \triangleq \sqrt{\left(x_r+e_x-x_s\right)^2+(y_r-y_s)^2+z_r^2}
		\end{equation}
		
		Using a Taylor expansion around the measured locations, the slant range can be approximated as follows:
		\begin{equation}\label{eq.range}
			r_s^\star \approx r_s + \frac{x_r-x_s}{r_s}e_x\,
		\end{equation}
		where $r_s \triangleq \sqrt{\left(x_r-x_s\right)^2+(y_r-y_s)^2+z_r^2}$ is the slant range to the scatterer based on the erroneous measured data. Replacing \eqref{eq.range} in \eqref{eq:beat.signal} gives:
		\begin{equation}
			\begin{aligned}
				s_B(i,n) = &\sigma\left(\bm{s}\right)\exp\left[-jk(i)r_s(n)\right]\times\\
				&\exp\left[-jk(i)\frac{x_r(n)-x_s}{r_s(n)}e_x(n)\right].
			\end{aligned}		
		\end{equation}
		
		After compensating the measured motion of the radar for the scatterer $\bm{s}$, we have:
		\begin{equation}
			\begin{aligned}
				s_C^\star(i,n) =& \sigma\left(x_s,y_s\right) \exp\left[-j\left(k_x(i,n)x_s+k_y(i,n)y_s\right)\right]\\
				&\times \exp\left[-jk(i)\frac{x_r(n)-x_s}{r_s(n)}e_x(n)\right].
			\end{aligned}		
		\end{equation}
		
		Considering the signals coming from all scatterers and approximating $\frac{x_r(n)-x_s}{r_s(n)}\approx\sin(\alpha_n)$, we get:
		
		\begin{equation}\label{eq:true.compensated.signal}
			\begin{aligned}
				s_C^\star(i,n) = &\int_{x_s} \int_{y_s} \sigma(x_s,y_s)\times\\ 
				&\exp\left[-j\left(k_x(i,n)x_s+k_y(i,n)y_s\right)\right]\times\\
				&\exp\left(jk_x(i,n) e_x\right) dx_s dy_s,
			\end{aligned}		
		\end{equation}
		considering \eqref{eq:compensated.signal} we get:
		\begin{equation}\label{eq:motion.error.model}
			s_C^\star(i,n) = \exp\left[jk_x(i,n) e_x(n)\right]s_C(i,n).
		\end{equation}
		This relationship demonstrates that the localization error, $e_x(n)$, manifests itself as a phase error in the compensation of the beat signal w.r.t. the range to SRP using the \emph{measured} localization data. 
		
		To model the error component, we adopt a linear parametric representation, i.e.:
		\begin{equation}\label{eq:error.model}
			e_x(n) = \beta n,
		\end{equation}
		where $\beta$ is the parameter representing the linear dependency of the localization error on $n$. Replacing this model in \eqref{eq:motion.error.model} gives the phase error as given by \eqref{eq:phase.error} which can be compensated using \eqref{eq:leca}.  
	\end{IEEEproof}

	Fig. \ref{fig:LECA} presents the proposed localization error compensation autofocus (LECA) algorithm, employing the gradient ascent algorithm (GAA) to rapidly converge to the optimal $\beta$ value, $\hat{\beta}$.    

	\begin{figure}[t]
		\begin{centering}
				\includegraphics[width=8cm]{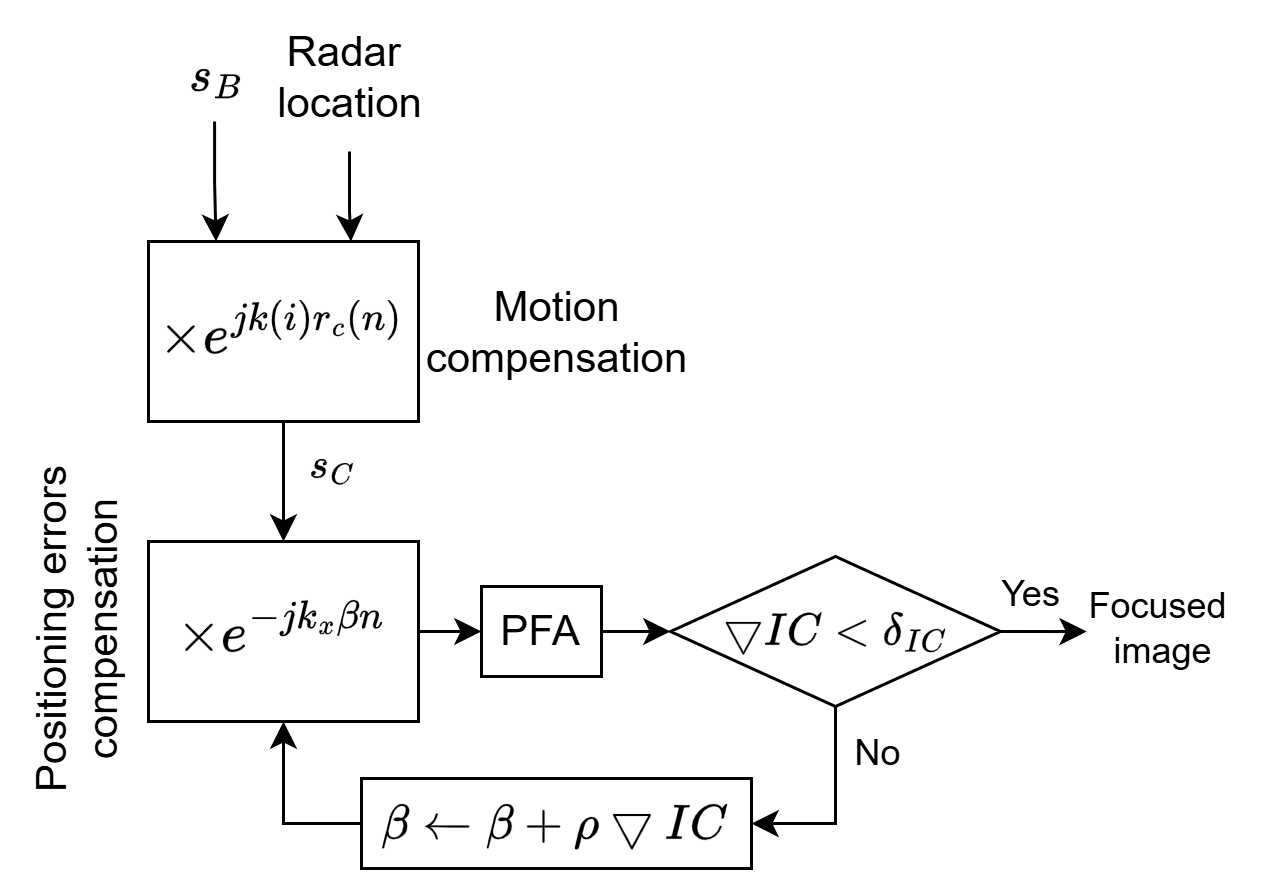}
				\par\end{centering}
		\caption{The localization error compensation autofocus (LECA) algorithm. Here, $\rho$ is the learning rate, $\bigtriangledown IC$ indicates the gradient of the image contrast, and $\delta_{IC}$ is a small value used for stopping the gradient ascent algorithm (GAA).  \label{fig:LECA}}
	\end{figure}	
	
	We now elaborate on three critical observations regarding the localization error model and its implications for SAR image quality:
    
	\begin{enumerate}
		\item Eq. \eqref{eq:motion.error.model} presents the impact of the localization error in only the along-track dimension. It is straightforward to extend the model to include the errors in the other two dimensions as given below:
		\begin{equation}\label{eq:motion.error.complete}
			\begin{aligned}
				s_C^\star(i,n) = &\exp\left[jk_x(i,n) e_x(n)\right]\times\\ &\exp\left[jk_y(i,n) e_y(n)\right]\times\\ &\exp\left[jk(i)\sin(\psi_n) e_z(n)\right]s_C(i,n),
			\end{aligned}
		\end{equation}
		where $e_y(n)$ and $e_z(n)$ respectively represent the localization errors in dimensions $y$ and $z$. Recall that $\psi_n$ denotes the grazing angle of the radar w.r.t. SRP. The equation above offers more precise error compensation at the expense of increased complexity. However, the vehicle moves primarily along its track dimension $x$ while its movement along the $y$ and $z$ axes is minimal during the SAR processing period. As a result, the localization errors, $e_y(n)$ and $e_z(n)$, have a negligible impact and are thus not accounted for in LECA.
		\item The linear parametric model \eqref{eq:error.model} considers only the error caused by velocity. Despite its simplicity, this model is sufficient for enhancing SAR image quality because of the following reasons: i. A zero-degree term is unnecessary since the SRP location is arbitrary, and it is assumed to be zero at $n=0$; ii. While higher-degree terms could potentially improve the image focus, they introduce additional complexity, which is not desirable in SAR processing.
        \item Eq. \eqref{eq:motion.error.model} shows that the localization error manifests in the cross-range dimension. Rewriting \eqref{eq:true.compensated.signal} gives:
        \begin{equation}
		\begin{aligned}
			&s_C^\star(i,n) = \int_{x_s} \int_{y_s} \sigma(x_s,y_s)\times\\ &\exp\left\{-j\left[k_x(i,n)\left(x_s+e_x(n)\right)+k_y(i,n)y_s\right]\right\}dx_sdy_s.
		\end{aligned}		
	\end{equation}
        This indicates that the error primarily affects the area around the SRP in the cross-range dimension. In other words, we should focus on image improvements in the vicinity of the SRP, as the error has minimal impact on more distant areas. Naturally, the PFA algorithm performs well mainly in areas close to the SRP, while distortion increases farther away.
	\end{enumerate}
	\subsection{BPA autofocus}
	BPA relies on precise radar location for each chirp, as shown in \eqref{eq:bpa.algorithm}. Localization errors result in inaccurate range interpolation across all image pixels, leading to significant image distortion. Implementing autofocus algorithms by optimizing an image quality metric over radar locations is computationally intensive, as it necessitates correcting the range for each individual image pixel, significantly increasing the complexity of the BPA process. To address this challenge, the localization error estimated using PFA is leveraged. Accordingly, we propose two approaches to enhance the focus of BPA.
	
	\subsubsection{LECA with IC maximization}
	 Once the localization error $e_x(n)$ is obtained by maximizing IC using PFA-LECA, it can be used to correct the localization data of backprojection. Specifically, before backprojection \eqref{eq:bpa.algorithm}, $x_r$ should be corrected by $x_r(n)\leftarrow x_r(n) + \hat{\beta}n$. Fig. \ref{fig:LECA.ic} illustrates the pipeline of this method referred to as LECA-IC.
	\begin{figure}[t]
		\begin{centering}
			\includegraphics[width=6cm]{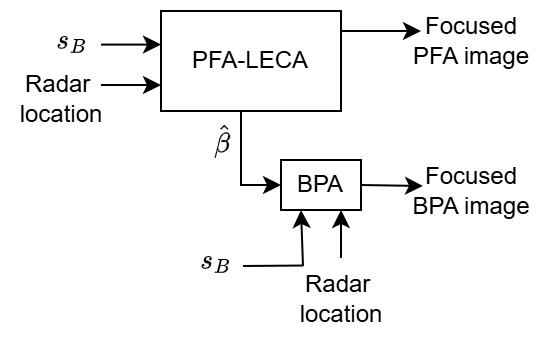}
			\par\end{centering}
		\caption{LECA-IC for BPA autofocus. The localization error parameter is estimated by maximizing IC by PFA-LECA. \label{fig:LECA.ic}}
	\end{figure}
	\subsubsection{LECA with PGA}
	PGA \cite{Wahl1996} provides a computationally efficient solution to estimate the phase error. It assumes that each range bin of the SAR image is dominated by at most one major scatterer and proves that the phase error spectrum is modulated by that major scatterer. Then, it estimates the phase error gradient by exploiting the redundancy across the major scatterers of the image. Consequently, the phase error is calculated by accumulation of the phase gradient. Since PGA incorporates no intense optimization, it is fast and has been adopted extensively in practical SAR systems. Furthermore, the high SNR resulting from the presence of strong scatterers makes automotive applications particularly well-suited for this autofocus algorithm.
	
	While PGA estimates the phase error, \eqref{eq:motion.error.model} gives an insight on how it can be related to the localization error. More specifically, using \eqref{eq:error.model} in \eqref{eq:motion.error.model} gives the phase error as:
	\begin{equation}
		k_x(i,n)e_x(n) = k(i)\alpha^{\prime} \beta n^2 ,
	\end{equation}
	wherein the approximation $\sin(\alpha_n) \approx \alpha^{\prime} n$ is used. Since the quadratic phase error is dominant in SAR imaging \cite{ghiglia1998two}, the phase error given by PGA, $\epsilon_{\phi}(n)$, can be approximated by a 2nd degree polynomial, i.e., $\epsilon_{\phi} \approx q_0 + q_1 n + q_2 n^2$. Then, an approximation of $\beta$ is given by:
	\begin{equation}
		\hat{\beta} = \frac{q_2}{\bar{k_i}\alpha^{\prime}},
	\end{equation}
	where $\bar{k_i}$ denotes the average of $k(i)$ over the fast time index $i$. The pipeline of using PGA for the BPA autofocus, referred to as LECA-PGA, is depicted in Fig. \ref{fig:LECA.pga}.
	\begin{figure}[t]
		\begin{centering}
			\includegraphics[width=6cm]{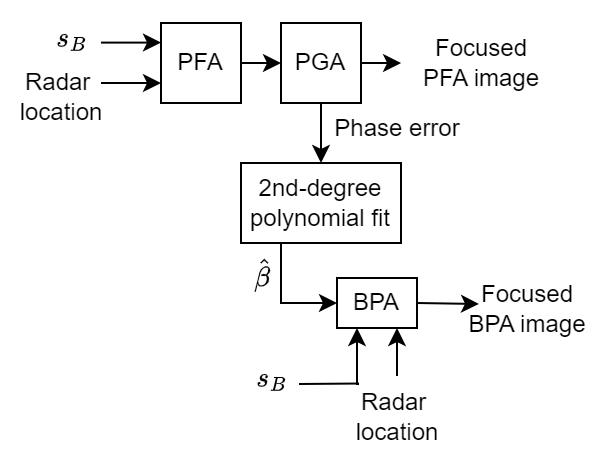}
			\par\end{centering}
		\caption{LECA with PGA for BPA autofocus. \label{fig:LECA.pga}}
	\end{figure}

    Here, we highlight two points to notice:

    \begin{itemize}
        \item In our proposed PFA-based autofocus method for BPA, alternative variants of BPA --- such as FBP \cite{FBP1999,lightBPA1996}, FFBP \cite{FFBP2003}, and multistage BP \cite{MultistageBP2023} --- can also be employed to enhance imaging efficiency by reducing overall computational complexity.
        \item PFA is a spotlight SAR algorithm assisting BPA, a stripmap SAR algorithm, in autofocus. Spotlight algorithms require a designated SRP from which the imaging process is referenced. This requirement, however, does not impose significant constraints, provided that the operational limitations of PFA and PGA algorithm are respected. For PFA, SRP must be positioned at a sufficient distance from the sensor such that the linear trajectory of the platform approximates a circular arc over the synthetic aperture. This far-field condition minimizes wavefront curvature errors and ensures accurate image formation. Additionally, excessive squint angles introduce range-dependent phase errors, compromising the planar wavefront assumption central to PFA and degrading image quality. Regarding PGA, large SAR CPIs must be avoided to prevent degradation of SNR due to temporal coherence loss or motion-induced decorrelation. PGA relies on robust SNR to accurately estimate phase gradients and correct residual phase errors across the aperture. If the CPI is excessively prolonged, coherence between pulses may diminish due to environmental changes or uncompensated motion, reducing SNR and impairing PGA ability to refine image focus.
    \end{itemize}

	\section{Experimental results}\label{sec:results}
	In what follows, we present the experimental results of the proposed autofocus algorithms in an automotive scenario.

	\subsection{Experiment scenario and methods}
	We evaluate the proposed autofocus algorithms in the automotive scenario of the DeepSense 6G dataset \cite{DeepSense}. Specifically, we use the data of the radar looking at the right side of the car when it moves along a rural road. The car velocity along its track is about $17.5$m/s with negligible motion along the other directions. The radar used is a MIMO FMCW radar with the parameters shown in Table \ref{tab:radar.setting.deepsense}. The radar has one transmitter and four receivers creating four virtual receive antennas aligned along the car track.
	
	\begin{table}
		\centering
		\begin{threeparttable}
			\caption{The setting of the radar used in the automotive scenario of DeepSense 6G dataset.}
			
			\begin{tabular}{ccccccc}\label{tab:radar.setting.deepsense}
				$f_c$ & $B$ & PRT\tnote{1} ($T_c$)  & CPI  & $N_c$ & $S$\tnote{2} & $F_s$\tnote{3}\\
				(GHz) & (MHz) &  ($\mu$s) &  (ms) & & & (MHz)\\
				\hline
				\hline
				77.45 & 900.9    & 65    & 100	& 128    & 256    & 5  \\
				\hline
			\end{tabular}
			\begin{tablenotes}
				\item[1] Pulse repetition time.
				\item[2] Number of samples per chirp.
				\item[3] Receiver sampling frequency.
			\end{tablenotes}
		\end{threeparttable}
	\end{table}
	
	The dataset includes vehicle locations recorded by a high-precision GPS at a rate of 10 updates per second. These GPS locations are interpolated at each chirp instant. However, since the interpolated locations are not entirely accurate, an autofocus algorithm is necessary to correct the resulting phase errors. 

	Since the radar is MIMO with four channels, the impact of the autofocus algorithms is evaluated in the following SAR processing pipelines:
	\begin{itemize}
		\item PFA after beamforming w.r.t. the SRP.  During the evaluations, we place the SRP of PFA at $22$m on the right side of the car.
		\item SISO BPA where only one channel is used for backprojection.
		\item Pre-BPA BF wherein beamforming is performed w.r.t. the radar boresight by coherently averaging the data of the virtual receive antennas. With this, the radar beam is narrowed down before running BPA.
		\item BPA with pixel-wise BF where beamforming is performed per pixel before backprojection.
	\end{itemize}
	
	Regarding the performance-complexity trade-off in the above pipelines, we anticipate the following points:
	\begin{itemize}
		\item The PFA images get distorted in the areas distant from the SRP. Furthermore, both PGA and LECA improve the PFA focus only in the vicinity of the SRP while negligible improvements can be expected in other areas, as discussed in Sec. \ref{sec:pfa.autofocus}. 
		\item Beamforming improves the image SNR, translating into larger IC and smaller IE values.
		\item BPA with pixel-wise BF gives the best imaging quality at the expense of higher complexity.
	\end{itemize} 
	
	The above four SAR processing pipelines are quantitatively compared in terms of the following metrics:
	\begin{itemize}
		\item Azimuth resolution (AR): Defined as the distance between the points with 3 dB below the maximum power of the main lobe peak in the azimuth direction \cite{sar_quality_assessment,sar_quality_assessment_ieee}. We report the AR averaged across the five strongest scatterers.
		\item Image contrast (IC): As defined in \eqref{eq:ic}.
		\item Image entropy (IE): Given by $IE \triangleq -\sum_{i=1}^{L} p_i \log_2(p_i)$ where image intensities are divided into $L$ levels, and $p_i$ represents the probability of intensities falling within the $i$th level. We use $L=256$.
		\item Computational complexity (CC): Measured as the runtime of each algorithm on a laptop equipped with an Intel 11th Gen Core i7-1185G7 CPU and 32GB RAM.  
	\end{itemize}
    
	\subsection{Results and discussion}	
	Three instances of SAR results using the above-mentioned processing pipelines are shown in Fig. \ref{fig:sar1}-\ref{fig:sar3} along with their quantified comparison listed in Table \ref{tab:metrics}. 
	
	Fig. \ref{fig:sar1} illustrates the imaging result when crossing a railway. As shown in \ref{fig:sar1}-b, both autofocus algorithms have narrowed the width of the main lobes in PFA. As expected, the PFA images get distorted in areas distant from SRP while this is not the case with BPA. In all BPA processing pipelines, both autofocus algorithms have improved the image focus. Beamforming, particularly pixel-wise BF, further improves imaging quality in BPA by better preserving scatterers and reducing sidelobes compared to Pre-BPA BF. However, this improvement comes at the cost of significantly increased complexity. 
	
	In Fig. \ref{fig:sar2}, the target scene consists of a building, a tree with foliage, a short wall, and various other details such as light poles. In the reconstructed SAR images, the building area is highlighted by a red rectangle. As seen, the foliage and the wall have been clearly pictured by PFA but the building is not well detected. In the BPA processing pipelines, the impact of autofocus is evident in the detection of the building as well as the light poles. Here, the building details are faded by pre-BPA BF, whereas BPA with pixel-wise BF reconstructs them well.
	
	The last example in Figure \ref{fig:sar3} demonstrates the advantage of BPA with pixel-wise BF over other methods in reconstructing scene details, including the roadsides, foliage, and light poles. The impact of autofocus algorithms is evident in the improved azimuth resolution. Additionally, the parked car is detected in all SAR processing pipelines, although it appears blurred without autofocus.
	
	The quantitative comparison in Table \ref{tab:metrics} highlights the significant improvement in azimuth resolution achieved by BPA. It is evident that the resolution is consistently enhanced by either autofocus algorithm. However, IC maximization entails much higher complexity. Furthermore, while the pre-BPA BF is shown to achieve the best performances in most cases, it is important to note that some image details may be missed due to beamforming with respect to the radar boresight, as particularly observed in example \#2.

\begin{figure*}[h]
    \begin{centering}
        \includegraphics[width=13cm]{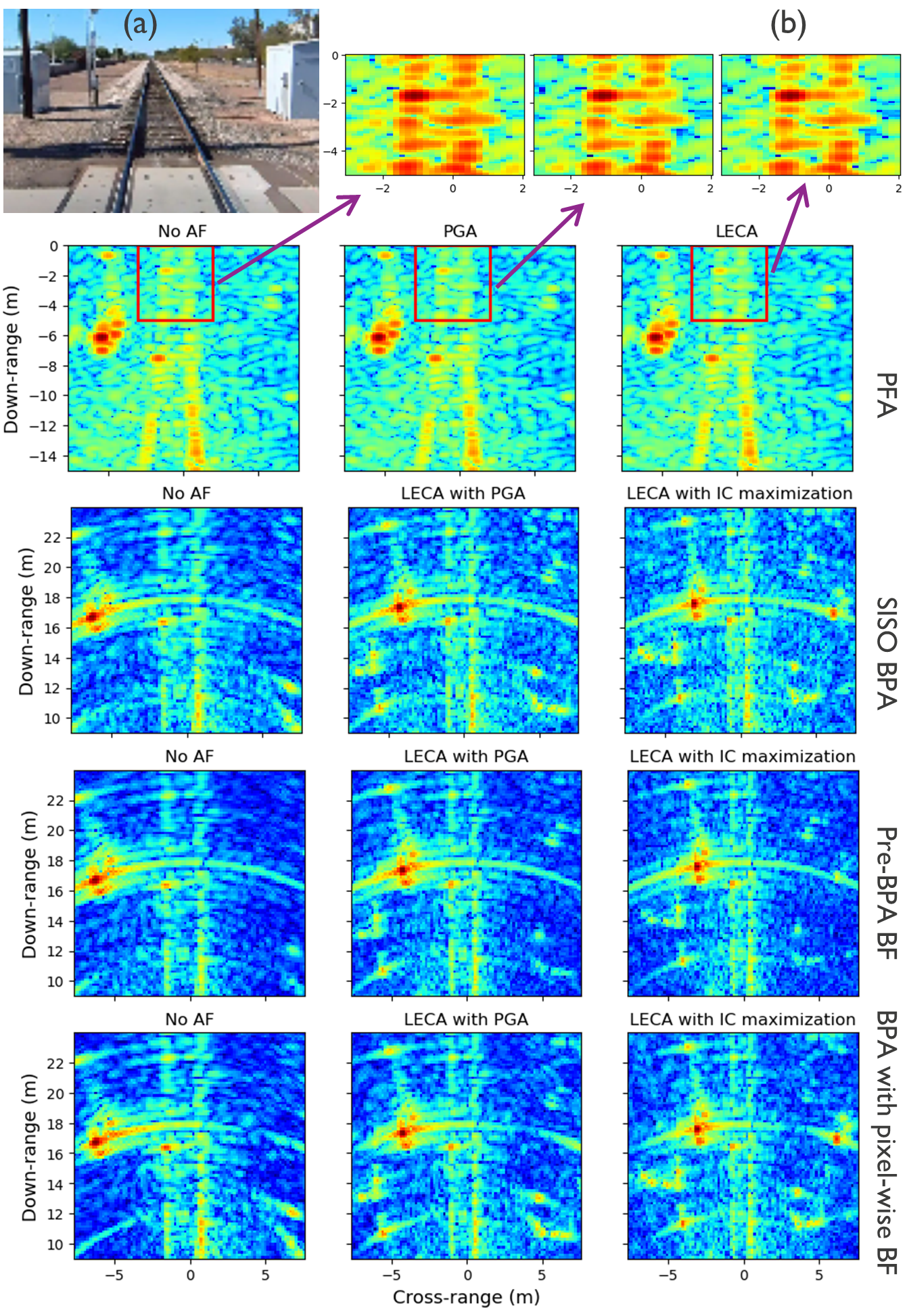}
        \par\end{centering}
    \caption{Example \#1. (a) The camera image of the SAR target scene. (b) The zoomed-in areas of specified in the PFA SAR images. The other rows depict the results of the different scenarios of SAR imaging. \label{fig:sar1}}
\end{figure*}	

\begin{figure*}[h]
    \begin{centering}
        \includegraphics[width=13cm]{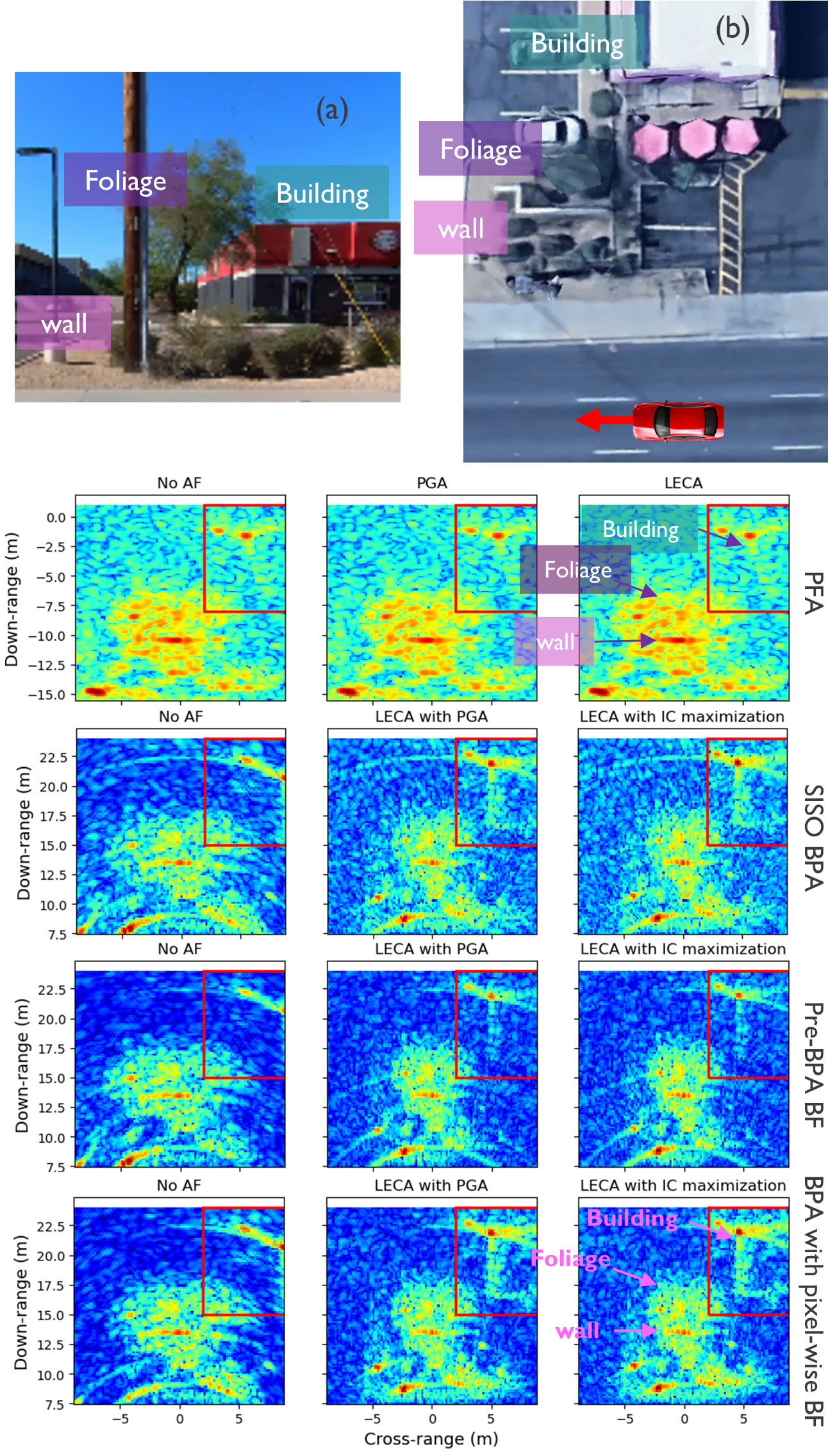}
        \par\end{centering}
    \caption{Example \#2. (a) The camera image of the target scene of SAR imaging. (b) The bird's eye view of the target scene from Google Maps \cite{googlemaps} with the red arrow indicating the car trajectory. The other rows depict the results of the different processing pipelines of SAR imaging with the building area highlighted by the red rectangles. \label{fig:sar2}}
\end{figure*}

\begin{figure*}[h]
    \begin{centering}
        \includegraphics[width=12cm]{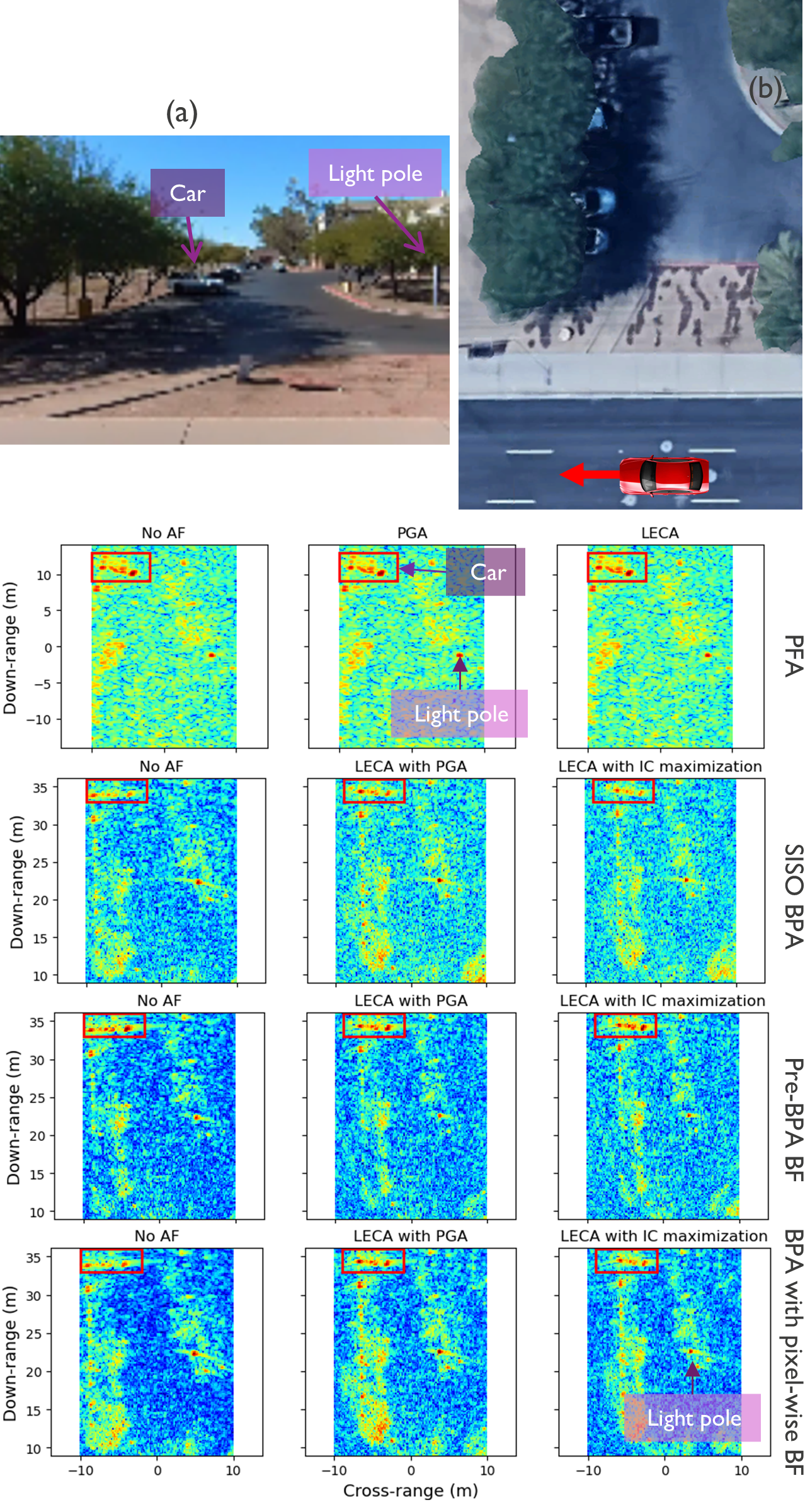}
        \par\end{centering}
    \caption{Example \#3. (a) The camera image of the target scene of SAR imaging. (b) The bird's eye view of the target scene from Google Maps \cite{googlemaps} with the red arrow indicating the car trajectory. The other rows depict the results of the different processing pipelines of SAR imaging, with the car area highlighted by red rectangles. \label{fig:sar3}}
\end{figure*}

\begin{table*}[htbp]
	\centering
	\small
	\begin{threeparttable}
	\caption{Comparison of the different scenarios of SAR imaging in the examples shown in Fig. \ref{fig:sar1}-\ref{fig:sar3}. The best values in each imaging processing pipeline are highlighted while the best values among the BPA processing pipelines are underlined.}
	\label{tab:metrics}
	\begin{tabular}{ll|ccc|ccc|ccc|ccc}
		\toprule
		\multirow{3}{*}{Ex.} & \multirow{3}{*}{Metric} & \multicolumn{3}{c|}{PFA} & \multicolumn{3}{c|}{SISO BPA} & \multicolumn{3}{c|}{Pre-BPA BF} & \multicolumn{3}{c}{BPA with   pixel-wise BF} \\ 
		\cmidrule(lr){3-5} \cmidrule(lr){6-8} \cmidrule(lr){9-11} \cmidrule(lr){12-14}
		& & No AF\tnote{1} & PGA & LECA & No AF & LP\tnote{2} & LI\tnote{3} & No AF & LP & LI & No AF & LP & LI \\
		\midrule
		\multirow{4}{*}{\#1} & AR\tnote{4} & \textbf{0.77} & 0.77 & 0.77 & 0.89 & 0.55 & \underline{\textbf{0.47}} & 0.85 & 0.68 & \textbf{0.72} & 0.81 & 0.64 & \textbf{0.47} \\
		& IC\tnote{5} & 60.13 & \textbf{60.46} & 60.23 & 29.28 & 32.82 & \textbf{37.07} & 30.50 & 36.62 & \underline{\textbf{41.97}} & 20.62 & 30.13 & \textbf{35.90} \\
		& IE\tnote{6} & 0.52 & \textbf{0.51} & 0.51 & 1.94 & 1.86 & \textbf{1.57} & 1.53 & 1.37 & \underline{\textbf{1.09}} & 2.43 & 1.68 & \textbf{1.26} \\
		& CC\tnote{7} & \textbf{0.27} & 0.27 & 58.38 & \underline{\textbf{0.19}} & 0.46 & 58.57 & \textbf{0.19} & 0.46 & 58.57 & \textbf{232.23} & 242.06 & 288.90 \\
		\midrule
		\multirow{4}{*}{\#2} & AR & 2.04 & \textbf{1.85} & 1.97 & 0.81 & 0.60 & \underline{\textbf{0.51}} & 0.81 & 0.60 & \textbf{0.51} & 0.85 & 0.60 & \textbf{0.51} \\
		& IC & 28.71 & \textbf{29.68} & 28.93 & 16.01 & 20.32 & \textbf{21.58} & 19.06 & 25.36 & \underline{\textbf{26.34}} & 13.86 & 19.96 & \textbf{21.79} \\
		& IE & 1.15 & \textbf{1.10} & 1.14 & 2.65 & \textbf{2.26} & 2.43 & 2.24 & \underline{\textbf{1.75}} & 1.93 & 2.63 & \textbf{2.04} & 2.17 \\
		& CC & \textbf{0.17} & 0.27 & 37.91 & \underline{\textbf{0.19}} & 0.44 & 38.08 & \textbf{0.22} & 0.45 & 38.10 & \textbf{218.34} & 222.30 & 256.28 \\
		\midrule
		\multirow{4}{*}{\#3} & AR & \textbf{1.32} & 1.32 & 1.32 & 0.60 & \underline{\textbf{0.55}} & 0.55 & 0.68 & 0.72 & \textbf{0.55} & 0.60 & \textbf{0.55} & 0.60 \\
		& IC & 24.91 & 25.07 & \textbf{25.09} & 9.55 & 7.54 & \textbf{9.57} & 9.43 & 9.46 & \underline{\textbf{10.04}} & 9.42 & \textbf{8.01} & 9.75 \\
		& IE & 2.30 & \textbf{2.29} & 2.29 & 3.58 & 4.06 & \textbf{3.50} & 3.96 & \textbf{3.74} & 3.80 & 3.44 & 3.94 & \underline{\textbf{3.40}} \\
		& CC & \textbf{0.76} & 0.77 & 131.10 & \underline{\textbf{0.19}} & 0.96 & 131.31 & \textbf{0.20} & 0.96 & 131.29 & \textbf{222.20} & 222.77 & 350.46 \\
		\bottomrule
	\end{tabular}
	\begin{tablenotes}
		\item[1] Autofocus
		\item[2] Localization error compensation by PGA
		\item[3] Localization error compensation by IC maximization
		\item[4] Azimuth resolution (m)
		\item[5] Image contrast
		\item[6] Image entropy
		\item[7] Computational complexity (s) 
	\end{tablenotes}
	\end{threeparttable}
\end{table*}

\subsection{Computational Complexity of the SAR processing pipelines and autofocus algorithms}
	The following assumes that the image size is $N\times N$ reconstructed based on $N_c$ pulses where each pulse contains $N_s$ samples.
	\subsubsection{PFA}
	The total complexity of PFA is dominated by 2D-FFT with a complexity of $\mathcal{O}(N^2\log N)$.
	\subsubsection{SISO BPA}
	In SISO BPA, each image pixel is reconstructed by back-projecting the range-compressed pulses resulting in a complexity of $\mathcal{O}\left(N_c N^2\right)$.
	\subsubsection{PGA}
	The PGA steps respectively consist of IFT, center shifting, FFT, and a maximum likelihood estimation. These operations are executed for $J$ major scatterers within the image for a few iterations (typically fewer than 10). Consequently, the complexity of each PGA iteration is primarily influenced by IFT and FFT, resulting in an overall complexity of $\mathcal{O}\left(JN\log N\right)$.
	\subsubsection{LECA}
	the LECA algorithm, as illustrated in Fig. \ref{fig:LECA}, performs GAA to maximize the PFA image contrast. If the optimization takes $T$ iterations, the total complexity will be $\mathcal{O}\left(TN^2\log N\right)$.
	\subsubsection{Pre-BPA BF}
	For a linear array including $N_v$ virtual antennas, beamforming w.r.t., the radar boresight is equivalent to summing the received chirps at $N_v$ receive antennas. The complexity of beamforming is added to that of BPA giving a complexity of $\mathcal{O}(N_v N_c N_s + N_c N^2)$.
	\subsubsection{BPA with pixel-wise BF}
	Since beamforming is conducted on a per-pixel basis, the complexity of this SAR processing pipeline is $\mathcal{O}\left(N_v N_c N_s N^2\right)$. This makes BPA with pixel-wise BF the most complex SAR processing pipeline.

    As shown in the evaluation results, PGA has a negligible computational burden, whereas LECA's complexity may increase depending on its convergence rate. Meanwhile, Pre-BPA BF with PGA-based localization error compensation provides a reasonable trade-off between complexity and quality, as confirmed by Table \ref{tab:metrics}. However, BPA with pixel-wise BF, though significantly more complex than other pipelines, excels at sidelobe suppression and detail enhancement, as evident in Fig. \ref{fig:sar1}-\ref{fig:sar3}. Notably, the above complexity analysis assumes sequential processing, whereas parallel processing on an FPGA \cite{bpa_fpga2022} or GPU \cite{bpa_gpu2013,bpa_gpu_url} can significantly reduce computational demands.
		
	\section{Conclusion}\label{sec:Conclusion}
	In this paper, we studied the impact of the localization error on SAR imaging with automotive FMCW radars. To this end, we began with the polar format algorithm (PFA). We identified that localization error manifests itself as a phase error in the beat signal after compensation w.r.t. the range to the SAR reference point (SRP). 
	
	We introduced the Localization Error Compensation Autofocus (LECA) technique, which determines the localization error parameter by optimizing the Image Contrast (IC). Building on this, we proposed the LECA-IC algorithm for the backprojection algorithm (BPA), where localization errors are corrected using IC maximization derived from PFA to enhance image focusing. Furthermore, we developed the LECA-PGA algorithm for BPA, leveraging the simplicity and effectiveness of Phase Gradient Autofocus (PGA) for localization error estimation. Due to PGA's low complexity, this approach provides an extremely efficient autofocus solution for BPA.
	
	Ultimately, we showed our algorithms' effectiveness in an automotive scenario. While we showed the effectiveness of the proposed autofocus algorithms in side-looking SAR (SL-SAR), they are applicable in other SAR configurations, including forward-looking SAR (FL-SAR) \cite{adnan2022,adnan2023} where the localization errors have a similar impact. By addressing the challenges of localization errors in SAR imaging for automotive applications, this study sets a foundation for further advancements in efficient and accurate SAR imaging techniques, ensuring robust performance under real-world conditions.

	\section{Acknowledgement}
	The research leading to these results received funding from the Horizon Europe project \emph{Edge AI Technologies for Optimised Performance Embedded Processing} (Grant agreement ID: 101097300). The authors thank Dr. Eddy De Greef for his support in the experimental evaluations conducted in the lab.
	
	\bibliography{keylatex}
\end{document}